\pgfplotsset{compat=1.14}
\pgfplotsset{every tick label/.append style={font=\footnotesize}}
\newcolumntype{R}{>{\raggedleft\arraybackslash}X}
\newcolumntype{L}{>{\raggedright\arraybackslash}X}
\newcolumntype{C}{>{\centering\arraybackslash}X}
\newcolumntype{A}{>{\columncolor{gray!25}}C}
\newcolumntype{a}{>{\columncolor{gray!25}}c}
\newlength{\tablen}
\newcolumntype{.}{D{.}{.}{-1}}
\renewcommand\p@subfigure{\arabic{figure}.}
\renewcommand\p@subtable{\arabic{table}.}
\setlist[itemize]{leftmargin=2.5\parindent}
\setlist[enumerate]{leftmargin=2.5\parindent}
\def\addlegendimage{\csname pgfplots@addlegendimage\endcsname}
\theoremstyle{plain}
\newtheorem{proposition}{Proposition}[section]
\theoremstyle{definition}
\newtheorem{example}{Example}
\theoremstyle{remark}
\def\keywords{\vspace{.5em} 
{\noindent \textit{Keywords}: }}
\def\JEL{\vspace{.5em} 
{\noindent \textbf{\emph{JEL} classification number}: }}
\def\AMS{\vspace{.5em} 
{\noindent \textbf{\emph{MSC} class}: }}
\title{The effects of draw restrictions \\ on knockout tournaments}
\author{\href{https://sites.google.com/view/laszlocsato}{L\'aszl\'o Csat\'o}\thanks{~E-mail: \emph{laszlo.csato@sztaki.hu}} }
\affil{Institute for Computer Science and Control (SZTAKI) \\
E\"otv\"os Lor\'and Research Network (ELKH) \\
Laboratory on Engineering and Management Intelligence \\
Research Group of Operations Research and Decision Systems}
\affil{Corvinus University of Budapest (BCE) \\
Department of Operations Research and Actuarial Sciences}
\affil{Budapest, Hungary}
\date{\today}
\def\Dedication{
{\noindent
$\mathfrak{Wir}$ $\mathfrak{haben}$ $\mathfrak{diese}$ $\mathfrak{einfachen}$ $\mathfrak{Vorstellungen}$ $\mathfrak{an}$ $\mathfrak{die}$ $\mathfrak{Wirklichkeit}$ $\mathfrak{angekn\ddot{u}pft}$ $\mathfrak{und}$ $\mathfrak{so}$ $\mathfrak{den}$ $\mathfrak{Weg}$ $\mathfrak{gezeigt}$, $\mathfrak{wie}$ $\mathfrak{man}$ $\mathfrak{aus}$ $\mathfrak{der}$ $\mathfrak{Wirklichkeit}$ $\mathfrak{zu}$ $\mathfrak{jenen}$ $\mathfrak{einfachen}$ $\mathfrak{Vorstellungen}$ $\mathfrak{wieder}$ $\mathfrak{zur\ddot{u}ckgelangen}$ $\mathfrak{und}$ $\mathfrak{also}$ $\mathfrak{festen}$ $\mathfrak{Grund}$ $\mathfrak{gewinnen}$ $\mathfrak{kann}$, $\mathfrak{damit}$ $\mathfrak{man}$ $\mathfrak{nicht}$ $\mathfrak{gen\ddot{o}tigt}$ $\mathfrak{sei}$, $\mathfrak{im}$ $\mathfrak{R\ddot{a}sonnement}$ $\mathfrak{zu}$ $\mathfrak{St\ddot{u}tzpunkten}$ $\mathfrak{seine}$ $\mathfrak{Zuflucht}$ $\mathfrak{zu}$ $\mathfrak{nehmen,}$ $\mathfrak{die}$ $\mathfrak{selbst}$ $\mathfrak{in}$ $\mathfrak{der}$ $\mathfrak{Luft}$ $\mathfrak{schweben}$.\footnote{~
``\emph{We have connected these simple ideas with reality, and therefore shown the way by which we may return again from the reality to those simple ideas, and obtain firm ground, and not be forced in reasoning to take refuge on points of support which themselves vanish in the air.}'' (Source: Carl von Clausewitz: \emph{On War}, Book 6, Chapter 8 [Methods of Resistance]. Translated by Colonel James John Graham, London, N. Tr\"ubner, 1873. \url{http://clausewitz.com/readings/OnWar1873/TOC.htm})}
}
\vspace{0.25cm}

\flushright
\noindent (Carl von Clausewitz: \emph{Vom Kriege})

\vspace{1cm} 
\justify }
\begin{document}

\newgeometry{top=10mm,bottom=15mm,left=25mm,right=25mm}

\maketitle
\thispagestyle{empty}
\Dedication

\begin{abstract}
\noindent
The paper analyses how draw constraints influence the outcome of a knockout tournament. The research question is inspired by European club football competitions, where the organiser generally imposes an association constraint in the first round of the knockout phase: teams from the same country cannot be drawn against each other. Its effects are explored in both theoretical and simulation models. An association constraint in the first round(s) is found to increase the likelihood of same nation matchups to approximately the same extent in each subsequent round. If the favourite teams are concentrated in some associations, they will have a higher probability to win the tournament under this policy but the increase is less than linear if it is used in more rounds.
Our results might explain the recent introduction of the association constraint for both the knockout round play-offs with 16 teams and the Round of 16 in the UEFA Europa League and UEFA Europa Conference League.

\keywords{draw procedure; knockout tournament; OR in sports; simulation; UEFA Europe League}

\AMS{62F07, 90-10, 90B90}

\JEL{C44, C63, Z20}
\end{abstract}

\clearpage
\restoregeometry

\section{Introduction} \label{Sec1}

The Operational Research community has recently made a substantial effort to understand the unforeseen and probably unintended consequences of various sports rules and rule changes \citep{Csato2021a, KendallLenten2017, LentenKendall2021, Wright2014}. The present paper examines how imposing draw constraints affect the outcome of a knockout tournament. Our study is inspired by the club competitions of the Union of European Football Associations (UEFA). In these tournaments, more than one team can participate from certain countries. Therefore, UEFA generally imposes an association constraint both in the group stage and the first round of the subsequent knockout phase, that is, teams from the same country cannot be drawn against each other.

Unfortunately, it remains unknown why UEFA aims to avoid games between teams from the same country \citep[Footnote~33]{BoczonWilson2018}. The organiser likely has an underlying preference to maintain the international character of the competitions in their early stages since this restriction is not applied in later rounds. It is also possible that the matches played by teams from different nations attract more viewers because teams from the same country play against each other in their domestic league, too. While this lack of information on the reality of the decision-making process undertaken by UEFA and on the arguments for the association constraint does not allow for a proper cost-benefit analysis, the restrictions clearly have some potential disadvantages which are worth exploring.

\begin{table}[ht!]
  \centering
  \caption{The national association of the UEFA Europa League \\ finalists between the 2009/10 and 2020/21 seasons}
  \label{Table1}
    \rowcolors{1}{gray!20}{}
    \begin{tabularx}{0.6\textwidth}{lLL} \toprule \hiderowcolors
    \multirow{2}{*}{Season} & \multicolumn{2}{c}{Association of the} \\
          & winner & runner-up \\ \bottomrule \showrowcolors
    2009/10 & Spain & England \\
    2010/11 & Portugal & Portugal \\
    2011/12 & Spain & Spain \\
    2012/13 & England & Portugal \\
    2013/14 & Spain & Portugal \\
    2014/15 & Spain & Ukraine \\
    2015/16 & Spain & England \\
    2016/17 & England & Netherlands \\
    2017/18 & Spain & France \\
    2018/19 & England & England \\
    2019/20 & Spain & Italy \\
    2020/21 & Spain & England \\ \bottomrule
    \end{tabularx}
\end{table}

The secondary club football tournament of Europe, the UEFA Europa League, has recently been dominated by English and Spanish teams, see Table~\ref{Table1}. Among the 12 winners, three have come from England and eight from Spain (more than 90\%), while there have been seven English and nine Spanish finalists out of 24 (two-thirds). A question arises naturally: To what extent does the association constraint contribute to the probability that the winner comes from a country with several representatives?

\begin{table}[t!]
  \centering
  \caption{Same nation matchups in the UEFA Europa League \\ knockout stage between the 2009/10 and 2020/21 seasons}
  \label{Table2}
    \rowcolors{1}{}{gray!20}
    \begin{tabularx}{0.6\textwidth}{lLL} \toprule
    Season & Round & Association \\ \bottomrule
    2009/10 & quarterfinals & Spain \\ \hline
    2010/11 & semifinals & Portugal \\
    2010/11 & final & Portugal \\ \hline
    2011/12 & semifinals & Spain \\
    2011/12 & final & Spain \\ \hline
    2012/13 & \multicolumn{2}{c}{---} \\ \hline
    2013/14 & Round of 16 & Italy \\
    2013/14 & Round of 16 & Spain \\
    2013/14 & semifinals & Spain \\ \hline
    2014/15 & Round of 16 & Italy \\
    2014/15 & Round of 16 & Spain \\ \hline
    2015/16 & Round of 16 & England \\
    2015/16 & Round of 16 & Spain \\
    2015/16 & quarterfinals & Spain \\ \hline
    2016/17 & Round of 16 & Belgium \\
    2016/17 & Round of 16 & Germany \\ \hline
    2017/18 & \multicolumn{2}{c}{---} \\ \hline
    2018/19 & quarterfinals & Spain \\
    2018/19 & final & England \\ \hline
    2019/20 & \multicolumn{2}{c}{---} \\ \hline
    2020/21 & \multicolumn{2}{c}{---} \\ \toprule
    \end{tabularx}
\end{table}

In addition, as Table~\ref{Table2} reveals, some clashes have taken place between teams from the same association in the rounds without the restriction (Round of 16, quarterfinals, semifinals). This observation inspires the second issue that our study wants to address: What is the effect of the association constraint in the Round of 32 on the likelihood of a same nation matchup in the subsequent rounds?

Academic researchers have extensively discussed draw constraints in the group stage of sports tournaments. The FIFA World Cup draw has been demonstrated to be unevenly distributed due to geographical restrictions \citep{Jones1990, RathgeberRathgeber2007, Guyon2015a}.
Several papers have made suggestions to create more balanced groups in the FIFA World Cup \citep{Guyon2015a, LalienaLopez2019, CeaDuranGuajardoSureSiebertZamorano2020}.
Draw constraints can be a powerful tool to avoid matches where a team has misaligned incentives \citep{Csato2022a}.
\citet{Guyon2022a} proposes a novel format for hybrid tournaments consisting of a preliminary group stage followed by a knockout phase and adapts it to the constraints put by the UEFA on the draw.

On the other hand, draw restrictions in knockout tournaments have received far less attention. According to \citet{Kiesl2013} and \citet{KlossnerBecker2013}, the mechanism used in the UEFA Champions League Round of 16 draw implies that every result of the draw could not have the same probability. However, the UEFA draw procedure remains close to a constrained-best in terms of fairness \citep{BoczonWilson2022}. 

Our main contributions to the topic can be summarised as follows:
\begin{itemize}
\item
First in the literature, the effects of using draw constraints in a knockout tournament are investigated in a theoretical model.
\item
Introducing an association constraint in the first round(s) is documented to increase the likelihood of same nation matchups to \emph{approximately the same extent} in each subsequent round.
\item
We show that, if the favourite teams are concentrated in some national associations, they will have a higher probability to win the tournament in the presence of an association constraint. This is especially important for international tournaments where higher-ranked countries can delegate more teams such as in the competitions organised by the UEFA. However, the increase is \emph{less than linear} if the restriction is imposed in more rounds.
\item
According to simulations based on empirical data, the UEFA policy of banning the same nation matchups in the first round of the knockout stage (Round of 32) in the UEFA Europa League is a strange compromise between maintaining the international character of the tournament and avoiding the dominance of certain nations. Imposing the association constraint in later rounds is an option worth further consideration.
\end{itemize}
While some of the above results might seem obvious at first sight, we think the insights written in italics are far from trivial and can be important for tournament design.

The simulation model is based on the format of the UEFA Europa League used between the 2009/10 and 2020/21 seasons, when the knockout phase started with the Round of 32. Currently, the Europa League---and the UEFA Europa Conference League, the third tier European competition launched in the 2021/22 season---contains knockout round play-offs with 16 teams, followed by the Round of 16, quarterfinals, semifinals, and the final. 
Our last finding says that preventing teams from the same national association to meet against each other only in the Round of 32 can hardly be justified. Since now the association constraint is used in the draw of both the knockout round play-offs and the Round of 16, the results can explain this recent decision of the UEFA.

The paper has some connections to \citet{BoczonWilson2022} who analyse the UEFA Champions League Round of 16 draw in order to quantify how the draw mechanism affects expected assignments and alters expected tournament prizes. A former version of the paper, \citet{BoczonWilson2018} also show in Figure~10 that allowing one same nation pairing in the draw decreases the aggregated number of same nation matchups in the subsequent rounds by about 10\%.
However, our investigation has important novelties since \citet{BoczonWilson2022} neither attempt to formulate a mathematical model for analysing the effects of restrictions imposed in different rounds nor consider their implications for the probability that the winner comes from a given national association. Therefore, while the choice of UEFA administrators for the draw mechanism is verified by \citet{BoczonWilson2022}, the current work sheds much more light on the alternatives to the current structure of draw constraints. To summarise, tournament organisers will definitely not understand the effects of draw restrictions used in knockout tournaments from \citet{BoczonWilson2022}---but they can get insight into this issue from our study.

The remainder of the paper is laid out as follows.
Section~\ref{Sec2} outlines the design of the UEFA Europa League between the 2009/10 and 2020/21 seasons and presents the draw policies to be compared. A basic mathematical model is provided in Section~\ref{Sec3}. Section~\ref{Sec4} investigates the consequences of the association constraint in 12 recent Europa League seasons via simulations. In particular, the methodology is detailed in Section~\ref{Sec41} and the results are discussed in Section~\ref{Sec42}. Section~\ref{Sec5} provides concluding remarks and some directions for future research. 

\section{The rules of the UEFA Europa League} \label{Sec2}

Between the 2009/10 and 2020/21 seasons, the group stage of the UEFA Europa League has consisted of 12 groups with four teams each. The top two from each group---altogether 24 clubs---have qualified for the knockout stage, where eight third-placed teams from the UEFA Champions League (the most prestigious club football tournament of the continent) group stage have joined them.
In this period, the design of the knockout phase has not changed. The Round of 32 pairings have been determined by a draw according to the following constraints \citep[Article~17]{UEFA2020b}:
\begin{itemize}
\item
The 12 Europa League group winners and the four best third-placed teams from the Champions League group stage are drawn against the 12 Europa League group runners-up and the remaining four third-placed teams from the Champions League group stage;
\item
Clubs from the same national association cannot play against each other;
\item
The winners and runners-up of the same Europa League group cannot play against each other.
\end{itemize}

In the Round of 32 draw, the group constraint can hardly be debated; repeated matchups are worth avoiding as they are probably less interesting for the spectators. The association constraint might be explained by a similar argument because these teams play against each other in their domestic leagues, too. But the knockout phase of the parallel UEFA Champions League starts with the Round of 16, where both the group and association constraints are in use. Hence, the association constraint would be reasonable to require in the Europa League Round of 16 draw.

To understand the effects of the association constraint, three alternatives to draw a knockout tournament with 32 teams are considered:
\begin{itemize}
\item
\emph{Method $\emptyset$}: there are no restrictions in the draw due to the associations of the teams.
\item
\emph{Method 32}: two teams from the same association cannot be drawn against each other in the Round of 32.
\item
\emph{Method 16}: two teams from the same association can be drawn against each other neither in the Round of 32 nor in the Round of 16.
\end{itemize}
The three options will be evaluated first in a simple mathematical model, followed by a simulation based on the historical results of the UEFA Europa League.

Even though the Europa League is a less prestigious competition compared to the UEFA Champions League, we think it provides a better starting point to study the role of draw restrictions because a knockout tournament with 32 teams offers more opportunities to use an association constraint. In particular, the rules of the Champions League allow since the 2015/16 season \citep[Article~3.08]{UEFA2015a} that five teams from the same country play in the quarterfinals, the second round of the knockout phase, when the association constraint cannot be satisfied. Furthermore, even if a valid assignment exists under the association constraint, this restriction may severely reduce the number of feasible pairings. For example, three English and three Spanish teams played in the quarterfinals of the 2021/22 Champions League, hence, the association constraint would have reduced the number of solutions by 60\% from $7 \times 5 \times 3 = 105$ to $42$.\footnote{~The number of cases when two English teams play against each other is $3 \times 5 \times 3 = 45$ as they come from a set of three teams and the remaining six teams should be matched. Analogously, there are $45$ cases when two Spanish teams are paired. The sum is $90$, however, $3 \times 3 \times 3 = 27$ among them contain matches between both English and Spanish teams. Thus, the number of valid assignments is $105 - 2 \times 45 + 27 = 42$.}

\section{A basis mathematical model} \label{Sec3}

Some consequences of the draw constraints can be uncovered in a simple probabilistic framework.
Assume that there are two teams from the same country $W$ in a knockout tournament with $32$ teams, each of them having a probability of $w$ to advance against any of the remaining 30 teams.
Inspired by Section~\ref{Sec1}, two measures will be analysed:
\begin{itemize}
\item
The probability that the two teams from country $W$ play against each other;
\item
The probability that the winner comes from country $W$.
\end{itemize}
Thus, it is sufficient to distinguish only two types of teams, teams from country $W$ and teams from outside $W$. Consequently, the winning probabilities against the same type are irrelevant with respect to both metrics.

While using a uniform $w$ for both teams of country $W$ is a simplification, we can capture with it an important aspect of the problem without the need of investigating a complex interaction of several variables. Note that introducing different probabilities $w_1$ and $w_2$ for winning against the remaining 30 teams implies three types of teams. Hence, the outcome of a match between the two teams from country $W$ should be a third parameter, or it should be modelled as a function of $w_1$ and $w_2$, which can be controversial.
Furthermore, it will be seen in Section~\ref{Sec4} that the results of the simulation analysis reinforce those of the mathematical model, and might justify the assumption of a unique probability $w$.
Finally, analytical findings seem to be difficult to derive even under this restrictive condition if there are three or more teams from country $W$ as the number of their possible pairings to be accounted for increases rapidly.

Consider Method $\emptyset$.
The probability of a same nation matchup in the Round of 32 is:
\[
P^{(\emptyset)}_{32} = \frac{1}{31}.
\]
In the Round of 16, it is:
\[
P^{(\emptyset)}_{16} = \left( 1 - P^{(\emptyset)}_{32} \right) \cdot \frac{1}{15} \cdot w^2 = \frac{30}{31} \cdot \frac{1}{15} \cdot w^2.
\]
In the quarterfinals, it is:
\[
P^{(\emptyset)}_{8} = \frac{30}{31} \cdot \frac{14}{15} \cdot \frac{1}{7} \cdot w^4.
\]
In the semifinals, it is:
\[
P^{(\emptyset)}_{4} = \frac{30}{31} \cdot \frac{14}{15} \cdot \frac{6}{7} \cdot \frac{1}{3} \cdot w^6.
\]
In the final, it is:
\[
P^{(\emptyset)}_{2} = \frac{30}{31} \cdot \frac{14}{15} \cdot \frac{6}{7} \cdot \frac{2}{3} \cdot w^8.
\]

Consider Method 32.
The probability of a same nation matchup in the Round of 32 is $P^{(32)}_{32} = 0$.
In the Round of 16, it is:
\[
P^{(32)}_{16} = \frac{1}{15} \cdot w^2.
\]
In the quarterfinals, it is:
\[
P^{(32)}_{8} = \frac{14}{15} \cdot \frac{1}{7} \cdot w^4.
\]
In the semifinals, it is:
\[
P^{(32)}_{4} = \frac{14}{15} \cdot \frac{6}{7} \cdot \frac{1}{3} \cdot w^6.
\]
In the final, it is:
\[
P^{(32)}_{2} = \frac{14}{15} \cdot \frac{6}{7} \cdot \frac{2}{3} \cdot w^8.
\]

Consider Method 16.
The probability of a same nation matchup in the Round of 32 is $P^{(16)}_{32} = 0$.
In the Round of 16, it is $P^3_{16} = 0$, too.
In the quarterfinals, it is:
\[
P^{(16)}_{8} = \frac{1}{7} \cdot w^4.
\]
In the semifinals, it is:
\[
P^{(16)}_{4} = \frac{6}{7} \cdot \frac{1}{3} \cdot w^6.
\]
In the final, it is:
\[
P^{(16)}_{2} = \frac{6}{7} \cdot \frac{2}{3} \cdot w^8.
\]

Now we turn to the second issue.
Consider Method $\emptyset$.
A team from country $W$ can win the tournament in two ways. First, the two teams from country $W$ might play a match against each other, hence the probability that one of them will be the final winner is:
\[
P^{(\emptyset)}_{32} \cdot w^4 + P^{(\emptyset)}_{16} \cdot w^3 + P^{(\emptyset)}_{8} \cdot w^2 + P^{(\emptyset)}_{4} \cdot w + P^{(\emptyset)}_{2}.
\]
Second, the other team from country $W$ might be eliminated before it plays against the winner, which has the following probability:
\[
2 w^5 \cdot \left[ \frac{30}{31} \left( 1-w \right) + \frac{30}{31} \cdot \frac{14}{15} w \left( 1-w \right) + \frac{30}{31} \cdot \frac{14}{15} \cdot \frac{6}{7} w^2 \left( 1-w \right) + \frac{30}{31} \cdot \frac{14}{15} \cdot \frac{6}{7} \cdot \frac{2}{3} w^3 \left( 1-w \right) \right].
\]
The sum of the two terms provides the chance that the winner comes from association $W$.

Consider Method 32.
The following formula gives the probability that the winner is from association $W$ and plays against the other team from its association:
\[
P^{(32)}_{32} \cdot w^4 + P^{(32)}_{16} \cdot w^3 + P^{(32)}_{8} \cdot w^2 + P^{(32)}_{4} \cdot w + P^{(32)}_{2}.
\]
Analogously, the probability that the winner comes from association $W$ and does not play against the other team from its association in the tournament is:
\[
2 w^5 \cdot \left[ \left( 1-w \right) + \frac{14}{15} w \left( 1-w \right) + \frac{14}{15} \cdot \frac{6}{7} w^2 \left( 1-w \right) + \frac{14}{15} \cdot \frac{6}{7} \cdot \frac{2}{3} w^3 \left( 1-w \right) \right].
\]

Consider Method 16.
The following formula gives the probability that the winner is from association $W$ and plays against the other team from its association:
\[
P^{(16)}_{32} \cdot w^4 + P^{(16)}_{16} \cdot w^3 + P^{(16)}_{8} \cdot w^2 + P^{(16)}_{4} \cdot w + P^{(16)}_{2}.
\]
Similarly, the probability that the winner comes from association $W$ and does not play against the other team from its association in the tournament is:
\[
2 w^5 \cdot \left[ \left( 1-w \right) + w \left( 1-w \right) + \frac{6}{7} w^2 \left( 1-w \right) + \frac{6}{7} \cdot \frac{2}{3} w^3 \left( 1-w \right) \right].
\]

These calculations can be easily generalised to a knockout tournament with $2^k$ teams.

If the association constraint holds only in the Round of 32 (Method 32), the probability that the two teams from country $W$ meet against each other is multiplied by the same factor in any subsequent rounds since $P^{(32)}_{16} / P^{(\emptyset)}_{16} = P^{(32)}_{8} / P^{(\emptyset)}_{8} = P^{(32)}_{4} / P^{(\emptyset)}_{4} = P^{(32)}_{2} / P^{(\emptyset)}_{2} = 31/30$. Analogously, if the association constraint holds in both the Round of 32 and the Round of 16 (Method 16), the probability of such a clash is multiplied by the same factor in any subsequent rounds since $P^{(16)}_{16} / P^{(\emptyset)}_{16} = P^{(16)}_{8} / P^{(\emptyset)}_{8} = P^{(16)}_{4} / P^{(\emptyset)}_{4} = P^{(16)}_{2} / P^{(\emptyset)}_{2} = 31/30 \cdot 15/14$. However, the sums of these probabilities are naturally not linear functions of each other because they depend on the winning probability $w$.

\begin{figure}[t!]

\begin{tikzpicture}
\begin{axis}[
name = axis1,
title = Unweighted rounds,
title style = {font=\small},
xlabel = Winning probability $w$,
x label style = {font=\small},
width = 0.46\textwidth,
height = 0.5\textwidth,
ymajorgrids = true,
xmin = 0,
xmax = 1,
ylabel = {Change in the probability of a same \\ nation matchup in percentage points},
y label style = {align = center, font=\small},
yticklabel style = {scaled ticks = false, /pgf/number format/fixed},
] 
\draw[very thick](axis cs:\pgfkeysvalueof{/pgfplots/xmin},0)  -- (axis cs:\pgfkeysvalueof{/pgfplots/xmax},0);
\addplot [red, thick] coordinates {
(0,-3.2258064516129)
(0.02,-3.22572036123522)
(0.04,-3.22546126098141)
(0.06,-3.22502664344781)
(0.08,-3.22441226193596)
(0.1,-3.22361202580645)
(0.12,-3.22261784775885)
(0.14,-3.22141943682253)
(0.16,-3.22000402906713)
(0.18,-3.21835604626571)
(0.2,-3.21645667096774)
(0.22,-3.21428332466348)
(0.24,-3.21180903394509)
(0.26,-3.20900166779448)
(0.28,-3.20582302735168)
(0.3,-3.20222776774194)
(0.32,-3.19816212976382)
(0.34,-3.19356245746497)
(0.36,-3.18835347585592)
(0.38,-3.18244630123725)
(0.4,-3.17573615483871)
(0.42,-3.16809974869395)
(0.44,-3.15939231089803)
(0.46,-3.14944421561945)
(0.48,-3.13805718146253)
(0.5,-3.125)
(0.52,-3.11000375452018)
(0.54,-3.09275648725692)
(0.56,-3.07289727159501)
(0.58,-3.05000964396765)
(0.6,-3.0236143483871)
(0.62,-2.99316134477335)
(0.64,-2.95802103047042)
(0.66,-2.91747462256349)
(0.68,-2.87070364683469)
(0.7,-2.81677847741936)
(0.72,-2.75464586944882)
(0.74,-2.68311542518989)
(0.76,-2.60084493241549)
(0.78,-2.50632451196505)
(0.8,-2.39785950967742)
(0.82,-2.27355206610323)
(0.84,-2.13128129562796)
(0.86,-1.96868200486093)
(0.88,-1.78312187836983)
(0.9,-1.57167705806451)
(0.92,-1.33110604075775)
(0.94,-1.05782181665535)
(0.96,-0.747862169751679)
(0.98,-0.396858059331118)
(1,0)
};

\addplot [blue, thick, dashdotdotted] coordinates {
(0,-3.2258064516129)
(0.02,-3.22838687539894)
(0.04,-3.23612548172597)
(0.06,-3.24901421108224)
(0.08,-3.26703941336559)
(0.1,-3.29018151152074)
(0.12,-3.31841451065344)
(0.14,-3.35170533264384)
(0.16,-3.39001295057292)
(0.18,-3.43328729156835)
(0.2,-3.48146787096774)
(0.22,-3.53448211498975)
(0.24,-3.59224332339493)
(0.26,-3.65464821791083)
(0.28,-3.72157401648756)
(0.3,-3.79287496774194)
(0.32,-3.86837827424086)
(0.34,-3.94787932756596)
(0.36,-4.03113617239403)
(0.38,-4.11786311111973)
(0.4,-4.20772335483871)
(0.42,-4.30032062080198)
(0.44,-4.39518957074366)
(0.46,-4.49178497877681)
(0.48,-4.58946951184384)
(0.5,-4.6875)
(0.52,-4.78501206810058)
(0.54,-4.88100299475439)
(0.56,-4.97431265869824)
(0.58,-5.06360242703888)
(0.6,-5.14733183410138)
(0.62,-5.22373289391434)
(0.64,-5.29078188365493)
(0.66,-5.34616842966837)
(0.68,-5.38726172196864)
(0.7,-5.41107367741936)
(0.72,-5.41421886608549)
(0.74,-5.39287100953892)
(0.76,-5.34271585419264)
(0.78,-5.25890021703052)
(0.8,-5.13597699539171)
(0.82,-4.96784592676038)
(0.84,-4.74768987880415)
(0.86,-4.46790644419584)
(0.88,-4.12003460904591)
(0.9,-3.69467625806452)
(0.92,-3.18141227386418)
(0.94,-2.56871298210646)
(0.96,-1.84384268848755)
(0.98,-0.99275804785004)
(1,0)
};
\end{axis}

\begin{axis}[
at = {(axis1.south east)},
xshift = 0.15\textwidth,
title = Weighted rounds,
title style = {font=\small},
xlabel = Winning probability $w$,
x label style = {font=\small},
width = 0.46\textwidth,
height = 0.5\textwidth,
ymajorgrids = true,
xmin = 0,
xmax = 1,
ylabel = {Change in the probability of a same \\ nation matchup in percentage points},
y label style = {align = center, font=\small},
yticklabel style = {scaled ticks = false, /pgf/number format/fixed},
legend style = {font=\small,at={(-1.5,-0.2)},anchor=north west,legend columns=2},
legend entries = {Method 32 compared to Method $\emptyset \qquad$,Method 16 compared to Method $\emptyset$}
] 
\draw[very thick](axis cs:\pgfkeysvalueof{/pgfplots/xmin},0)  -- (axis cs:\pgfkeysvalueof{/pgfplots/xmax},0);
\addplot [red, thick] coordinates {
(0,-0.3584229390681)
(0.02,-0.358401860765952)
(0.04,-0.358338014562876)
(0.06,-0.358229555776689)
(0.08,-0.358073373074841)
(0.1,-0.357865030073391)
(0.12,-0.357598677360208)
(0.14,-0.357266928727063)
(0.16,-0.35686069361945)
(0.18,-0.356368956037144)
(0.2,-0.355778488342721)
(0.22,-0.355073486659417)
(0.24,-0.354235112763923)
(0.26,-0.353240925603861)
(0.28,-0.352064183793911)
(0.3,-0.350672998668715)
(0.32,-0.349029315694878)
(0.34,-0.347087700268588)
(0.36,-0.344793902149533)
(0.38,-0.342083171006008)
(0.4,-0.338878293770268)
(0.42,-0.335087322727384)
(0.44,-0.330600961485041)
(0.46,-0.325289574195897)
(0.48,-0.318999781628312)
(0.5,-0.311550605905445)
(0.52,-0.302729123956897)
(0.54,-0.292285587951272)
(0.56,-0.279927969202196)
(0.58,-0.265315880264545)
(0.6,-0.248053828161802)
(0.62,-0.227683749909649)
(0.64,-0.203676779725088)
(0.66,-0.175424195534585)
(0.68,-0.142227490618891)
(0.7,-0.103287514456393)
(0.72,-0.0576926250510396)
(0.74,-0.00440579325507867)
(0.76,0.0577494021790276)
(0.78,0.130103951858843)
(0.8,0.214159830141666)
(0.82,0.311607713246614)
(0.84,0.42434606945809)
(0.86,0.554501691565104)
(0.88,0.704451743457102)
(0.9,0.87684739457245)
(0.92,1.07463911767184)
(0.94,1.30110372718427)
(0.96,1.55987323714967)
(0.98,1.85496561955741)
(1,2.19081754565627)
};

\addplot [blue, thick, dashdotdotted] coordinates {
(0,-0.3584229390681)
(0.02,-0.359054473906914)
(0.04,-0.360957573026084)
(0.06,-0.364157487350709)
(0.08,-0.368695522625032)
(0.1,-0.374627860727087)
(0.12,-0.38202388953131)
(0.14,-0.390964021341188)
(0.16,-0.401536974206063)
(0.18,-0.413836484728243)
(0.2,-0.427957415258577)
(0.22,-0.443991212670692)
(0.24,-0.462020670196114)
(0.26,-0.48211393809451)
(0.28,-0.504317723225334)
(0.3,-0.52864961187916)
(0.32,-0.555089444519038)
(0.34,-0.58356966537421)
(0.36,-0.613964564120567)
(0.38,-0.646078321174247)
(0.4,-0.679631762416795)
(0.42,-0.71424772346234)
(0.44,-0.749434917869271)
(0.46,-0.784570197990901)
(0.48,-0.818879091451664)
(0.5,-0.851414490527394)
(0.52,-0.88103336600026)
(0.54,-0.906371371350975)
(0.56,-0.9258151974429)
(0.58,-0.937472532144712)
(0.6,-0.939139473630312)
(0.62,-0.928265240386683)
(0.64,-0.901914015252438)
(0.66,-0.856723755101817)
(0.68,-0.788861792080919)
(0.7,-0.693977046594984)
(0.72,-0.567148666537555)
(0.74,-0.402830901544417)
(0.76,-0.194794015347127)
(0.78,0.0639389664068152)
(0.8,0.381159881208398)
(0.82,0.765547641664971)
(0.84,1.22674525415871)
(0.86,1.77544186965059)
(0.88,2.42346009780279)
(0.9,3.18384882130056)
(0.92,4.07098175296241)
(0.94,5.10066198393537)
(0.96,6.29023277698039)
(0.98,7.65869486456057)
(1,9.2268305171531)
};
\end{axis}
\end{tikzpicture}

\captionsetup{justification=centering}
\caption{The effect of draw restrictions on the probability of a match \\ between two teams from the same national association, analytical model}
\label{Fig1}

\end{figure}
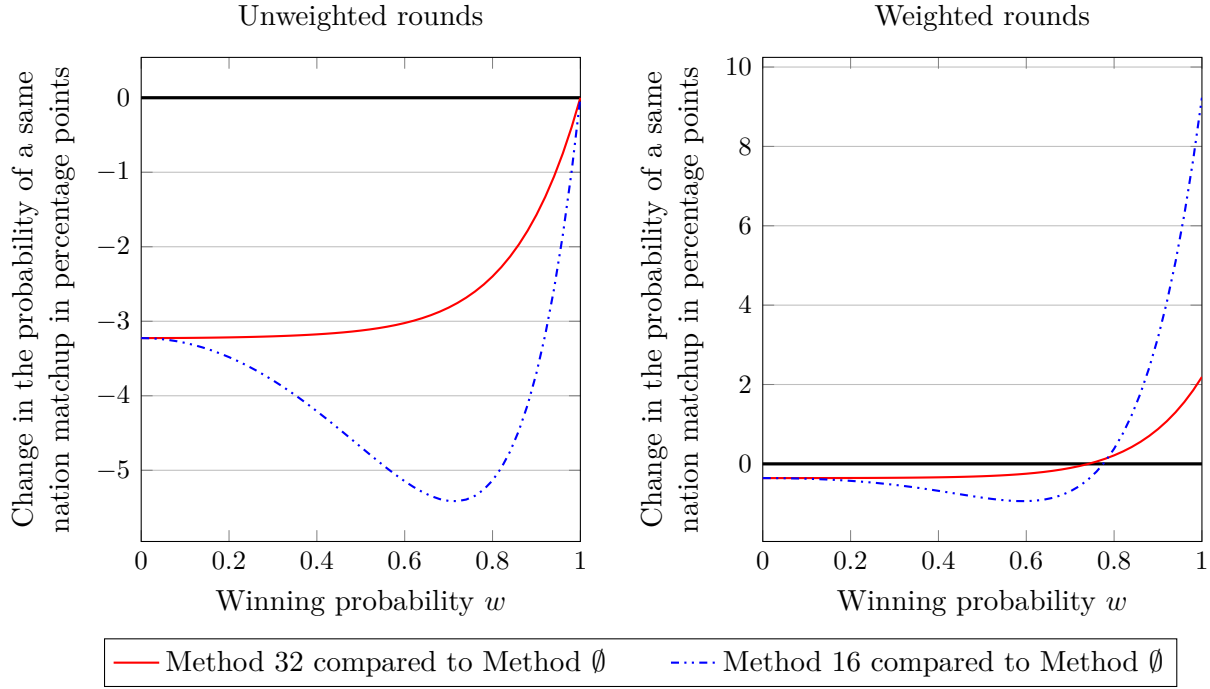


According to the left panel of Figure~\ref{Fig1}, Method 32 is able to reduce the probability of a same nation matchup by at most $1/31$ ($\approx 3.23$ percentage points), which is the chance that the two teams from country $W$ are paired in the Round of 32 in the absence of the draw constraint (Method $\emptyset$). On the other hand, Method 16 becomes the most effective when the winning probability is around 0.7, that is, association $W$ is relatively strong. The explanation is obvious: prohibiting a particular match in the Round of 16 can be useful if the two teams reach this stage with a substantial probability. The chart suggests a theoretical result, too.

\begin{proposition} \label{Prop1}
The probability that the two teams from country $W$ play against each other during the whole tournament is not increased by the draw restrictions.
\end{proposition}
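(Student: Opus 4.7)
My plan is to avoid the polynomial case-checking entirely and instead exploit a single conditional-probability identity. The first step is to recognise the restricted draws as conditional distributions of the unrestricted one. Let $E_{32}$ denote the event that the two teams from country $W$ are not paired in the Round of 32, and let $E_{16}$ denote the analogous event for the Round of 16, both viewed as events in the probability space of Method~$\emptyset$. I would argue that the draw distribution of Method~32 coincides with that of Method~$\emptyset$ conditioned on $E_{32}$, and the draw distribution of Method~16 coincides with that of Method~$\emptyset$ conditioned on $E_{32}\cap E_{16}$. This is natural because each restricted draw is defined as uniform over the pairings that avoid same-nation clashes, which is exactly the Method~$\emptyset$ uniform draw restricted to the same event.

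The second step is a set-theoretic observation. Let $M$ denote the event that the two $W$-teams meet at some point during the tournament. Then $M^c \subseteq E_{32}$ and $M^c \subseteq E_{32}\cap E_{16}$, because if the two teams never meet anywhere then, a fortiori, they are not paired in any particular round. Combining this containment with the definition of conditional probability gives
\[
P_{32}(M^c) \;=\; P_\emptyset(M^c \mid E_{32}) \;=\; \frac{P_\emptyset(M^c)}{P_\emptyset(E_{32})} \;\geq\; P_\emptyset(M^c),
\]
because $P_\emptyset(E_{32}) \leq 1$. Taking complements yields $P_{32}(M) \leq P_\emptyset(M)$, and the identical argument with $E_{32}\cap E_{16}$ in place of $E_{32}$ gives $P_{16}(M) \leq P_\emptyset(M)$, which is the statement of the proposition.

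The main obstacle, and the only genuinely non-trivial step, is the first one: verifying that sequentially imposing the same-nation restriction in the Round of 32 and in the Round of 16 really produces the Method~$\emptyset$ distribution conditioned on $E_{32}\cap E_{16}$, rather than something subtly different. This rests on the fact that, given the set of Round-of-16 participants, the Round-of-16 pairing draw under Method~$\emptyset$ is conditionally independent of the Round-of-32 draw, so conditioning round by round factorises correctly into the joint conditioning on $E_{32}\cap E_{16}$. Once this is granted the rest collapses to the elementary identity $P(A\mid E) = P(A)/P(E)$ when $A\subseteq E$. As a by-product, exactly the same argument---now conditioning Method~32 on $E_{16}$---also delivers $P_{16}(M) \leq P_{32}(M)$, strengthening the claim to a full chain of inequalities across the three draw policies.
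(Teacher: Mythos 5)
Your treatment of Method 32 is correct, and it is a genuinely different route from the paper's. The paper proves the proposition by direct computation: it writes $P^{(\emptyset)}_{32}+P^{(\emptyset)}_{16}+P^{(\emptyset)}_{8}+P^{(\emptyset)}_{4}+P^{(\emptyset)}_{2}=\frac{1}{31}+\frac{30}{31}\bigl(P^{(32)}_{16}+P^{(32)}_{8}+P^{(32)}_{4}+P^{(32)}_{2}\bigr)$ and notes that the bracketed sum is a probability, hence at most one (the Method 16 case is ``left to the reader''). Your observation that Method 32 is exactly Method $\emptyset$ conditioned on $E_{32}$, combined with $M^{c}\subseteq E_{32}$ and $P(A\mid E)\geq P(A)$ for $A\subseteq E$, delivers the same inequality with no polynomial bookkeeping and would generalise immediately to any bracket size and any winning probabilities. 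For a single round of restriction this is the cleaner argument.

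The step you yourself flag as the only non-trivial one, however, fails for Method 16: the Method 16 draw is \emph{not} Method $\emptyset$ conditioned on $E_{32}\cap E_{16}$. Round-by-round conditioning reproduces joint conditioning only if the per-round normalising constant $P_\emptyset(E_{16}\mid\text{history})$ is the same across all histories, and here it is not --- it equals $14/15$ when both $W$-teams reach the Round of 16 and $1$ otherwise. The conditional independence of the two draws given the set of survivors does not rescue the factorisation, because the joint conditioning reweights the distribution of the survivor set itself (it penalises exactly those branches in which both $W$-teams survive). Concretely, in the paper's model $P_\emptyset(E_{32}\cap E_{16})=\frac{30}{31}\cdot\frac{15-w^{2}}{15}$, whereas your factorisation would force it to be $\frac{30}{31}\cdot\frac{14}{15}$; these agree only at $w=1$. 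Moreover the discrepancy points the wrong way: since $P_\emptyset(M\cap E_{32}\cap E_{16})=P^{(\emptyset)}_{8}+P^{(\emptyset)}_{4}+P^{(\emptyset)}_{2}=\frac{30}{31}\cdot\frac{14}{15}\,P_{16}(M)$, one finds $P_\emptyset(M\mid E_{32}\cap E_{16})=\frac{14}{15-w^{2}}\,P_{16}(M)\leq P_{16}(M)$, so an upper bound on the jointly conditioned probability does not bound $P_{16}(M)$ at all. The same defect invalidates, as written, your by-product $P_{16}(M)\leq P_{32}(M)$. The idea is salvageable: apply your containment argument \emph{within each fixed branch}, i.e.\ condition on the set of Round-of-16 participants, note that ``the $W$-teams never meet from the Round of 16 onward'' is contained in $E_{16}$ there, conclude the inequality branch by branch, and then average over branches (whose distribution is identical under Methods 32 and 16). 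That yields $P_{16}(M)\leq P_{32}(M)\leq P_\emptyset(M)$, but it is a different argument from the single global conditioning you propose.
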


\begin{proof}
For Method 32,
\begin{eqnarray*}
P^{(\emptyset)}_{32} + P^{(\emptyset)}_{16} + P^{(\emptyset)}_{8} + P^{(\emptyset)}_{4} + P^{(\emptyset)}_{2} & = & \frac{1}{31} + \frac{30}{31} \cdot \left( P^{(32)}_{16} + P^{(32)}_{8} + P^{(32)}_{4} + P^{(32)}_{2} \right) \geq \\
 & \geq & P^{(32)}_{16} + P^{(32)}_{8} + P^{(32)}_{4} + P^{(32)}_{2}
\end{eqnarray*}
because the latter sum is a probability, hence it cannot be greater than one.

The calculation for Method 16 is also elementary and left to the reader.
\end{proof}

The association constraint aims to prevent same nation matchups, probably because a match between the two teams from the same country $W$ decreases attention and diversity. However, these costs for a match between the two teams from the same country $W$ are not necessarily uniform, and they can be higher in a later round of the tournament where fewer matches are played. For this purpose, a weighted cost is computed with the weights derived from the 2019/20 UEFA club competitions revenue distribution system \citep{UEFA2019b}. Qualification for the knockout stage of the Europa League is awarded by the following amounts:
(a) 0.5 million Euros for the Round of 32;
(b) 1.1 million Euros for the Round of 16;
(c) 1.5 million Euros for the quarterfinals;
(d) 2.4 million Euros for the semifinals; and
(e) 4.5 million Euros for the final.
Since the prize money provides the weights, the weighted cost of a same nation matchup under Method $\emptyset$ is:
\[
\frac{0.5 P^{(\emptyset)}_{32} + 1.1 P^{(\emptyset)}_{16} + 1.5 P^{(\emptyset)}_{8} + 2.4 P^{(\emptyset)}_{4} + 4.5 P^{(\emptyset)}_{2}}{0.5 + 1.1 + 1.5 + 2.4 + 4.5}.
\]
This value serves mainly for comparative purposes; its maximum is less than one as the two dominating teams do not necessarily meet in the final even if $w=1$.

The right panel of Figure~\ref{Fig1} shows the effects of restrictions on the weighted cost of a match between the two teams from country $W$. While Methods 32 and 16 are better than Method $\emptyset$ if $w$ does not exceed 0.75, the prohibition of such a clash in the first round(s) becomes detrimental in the case of strong teams.

\begin{figure}[t!]

\begin{tikzpicture}
\begin{axis}[
name = axis1,
xlabel = Winning probability $w$,
x label style = {font=\small},
width = 0.97\textwidth,
height = 0.6\textwidth,
ymajorgrids = true,
xmin = 0,
xmax = 1,
ylabel = {Change in the probability of winning \\ the tournament in percentage points},
y label style = {align = center, font=\small},
yticklabel style = {scaled ticks = false, /pgf/number format/fixed, /pgf/number format/precision=3},
legend style = {font=\small,at={(-0.05,-0.15)},anchor=north west,legend columns=2},
legend entries = {Method 32 compared to Method $\emptyset \qquad$,Method 16 compared to Method $\emptyset$}
] 
\draw[very thick](axis cs:\pgfkeysvalueof{/pgfplots/xmin},0)  -- (axis cs:\pgfkeysvalueof{/pgfplots/xmax},0);
\addplot [red, thick] coordinates {
(0,0)
(0.02,-4.94795700686452E-07)
(0.04,-7.57539875146322E-06)
(0.06,-3.66224862888465E-05)
(0.08,-0.000110284160749172)
(0.1,-0.000255917419354839)
(0.12,-0.000503032979889384)
(0.14,-0.00088074756489216)
(0.16,-0.00141525030357587)
(0.18,-0.00212729310723568)
(0.2,-0.00302971870967741)
(0.22,-0.00412504453924402)
(0.24,-0.00540312570336417)
(0.26,-0.00683892612018577)
(0.28,-0.00839043322478591)
(0.3,-0.00999675870967734)
(0.32,-0.0115764754308433)
(0.34,-0.0130262489213473)
(0.36,-0.0142198309046654)
(0.38,-0.0150074917892914)
(0.4,-0.0152159793548393)
(0.42,-0.0146491017078691)
(0.44,-0.0130890440929364)
(0.46,-0.010298541290904)
(0.48,-0.00602404012247718)
(0.5,0)
(0.52,0.00804550646490987)
(0.54,0.0183837160866646)
(0.56,0.0312764289096115)
(0.58,0.0469653851539076)
(0.6,0.0656595406451604)
(0.62,0.0875199988085579)
(0.64,0.112642339408234)
(0.66,0.141036065672942)
(0.68,0.172600872270351)
(0.7,0.207099416774192)
(0.72,0.244126256811261)
(0.74,0.283072593978223)
(0.76,0.323086443882742)
(0.78,0.363027829287621)
(0.8,0.401418570322576)
(0.82,0.436386222074125)
(0.84,0.465601685570982)
(0.86,0.486209993250186)
(0.88,0.494753744416809)
(0.9,0.487088640000011)
(0.92,0.458290539057005)
(0.94,0.40255343198754)
(0.96,0.313077697292097)
(0.98,0.181947979939268)
(1,0)
};

\addplot [blue, thick, dashdotdotted] coordinates {
(0,0)
(0.02,-0.00000002041905152)
(0.04,-6.24154378240004E-07)
(0.06,-4.51756249673143E-06)
(0.08,-0.00001810144165888)
(0.1,-5.23885714285716E-05)
(0.12,-0.000123270398326492)
(0.14,-0.000251141950832642)
(0.16,-0.000459899729346558)
(0.18,-0.000775334394593276)
(0.2,-0.00122294857142857)
(0.22,-0.00182523999404033)
(0.24,-0.00259850154316363)
(0.26,-0.00354920246612546)
(0.28,-0.00467002922631171)
(0.3,-0.00593568000000011)
(0.32,-0.00729852382543878)
(0.34,-0.00868425381155401)
(0.36,-0.00998768363175955)
(0.38,-0.0110688577620026)
(0.4,-0.0117496685714282)
(0.42,-0.0118111974388532)
(0.44,-0.0109920225486435)
(0.46,-0.00898776291557435)
(0.48,-0.00545215649976963)
(0.5,-6.93889390390723E-16)
(0.52,0.00778769094372661)
(0.54,0.018355909849041)
(0.56,0.0321648505284183)
(0.58,0.0496756006122284)
(0.6,0.0713318399999996)
(0.62,0.0975370085618016)
(0.64,0.128626367775372)
(0.66,0.164833339932957)
(0.68,0.206249466084341)
(0.7,0.252777279999999)
(0.72,0.304075350139205)
(0.74,0.35949469489463)
(0.76,0.418005728254645)
(0.78,0.478114843479732)
(0.8,0.537769691428569)
(0.82,0.594252157792696)
(0.84,0.644057989707203)
(0.86,0.682761966996859)
(0.88,0.704867456694347)
(0.9,0.703639131428568)
(0.92,0.670917572826235)
(0.94,0.596914420201033)
(0.96,0.469986662518251)
(0.98,0.276388607922662)
(1,0)
};
\end{axis}
\end{tikzpicture}

\captionsetup{justification=centering}
\caption{The effect of draw restrictions on the probability that a team \\ from the association with two teams wins the tournament, analytical model}
\label{Fig2}

\end{figure}
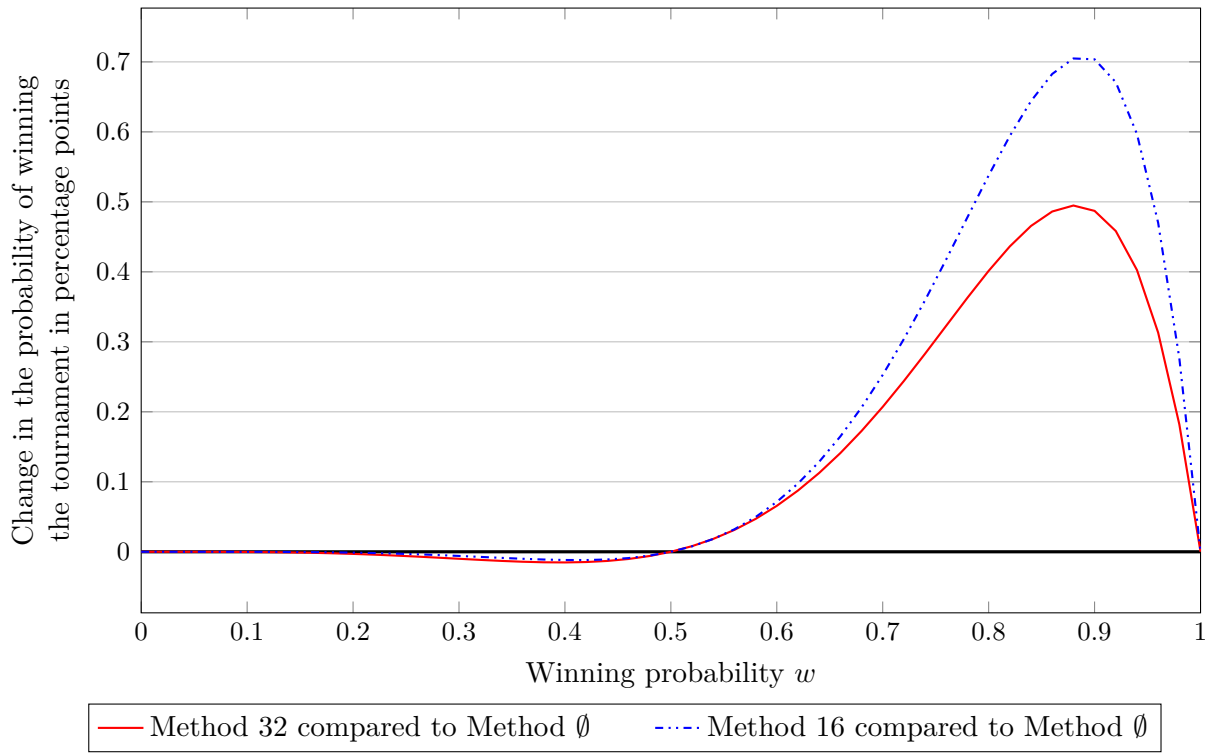


Turning to the second question of our study, Figure~\ref{Fig2} reveals how the impact of the draw constraint depends on the winning probability $w$ with respect to the likelihood that the winner of the tournament comes from country $W$. Obviously, if the teams of this association are weak ($w < 0.5$), both Methods 32 and 16 are unfavourable for them as they cannot play against each other at the beginning of the tournament. On the other hand, the effects are more serious---although still moderated in absolute terms---when they often defeat any other team. Again, the result is intuitive; banning a match between the strong teams helps them to win the tournament. It is also worth noting that imposing the constraint in the first two rounds rather than only in the first round is far from doubling the effect.

To summarise, the abstract mathematical model conveys two important messages:
\begin{itemize}
\item
If draw constraints apply in the first round(s), the probability of a clash between the two teams from the same association is multiplied by the same factor in each subsequent round.
\item
If the association with two teams is relatively strong, imposing an association constraint favours its teams. On the other hand, excluding same nation matchups in the first two rounds is substantially less than doubling the effect of using this restriction only in the first round.
\end{itemize}
The second observation is especially relevant because the winning probability $w$ is more likely to be above $0.5$ in the real world. The reason is that the teams qualifying for similar knockout tournaments are not chosen randomly, countries with better teams usually get more places in the group stage and their clubs also have a higher probability to qualify for the knockout stage.

\section{Simulation results} \label{Sec4}

In the following, the effects of the association constraint in the UEFA Europa League will be quantified. As Section~\ref{Sec3} has demonstrated, an exact mathematical computation remains impossible due to the complex interactions between the constraints for different national associations. 
Therefore, the research questions will be addressed via Monte Carlo simulations, a standard approach in the analysis of tournament designs \citep{ScarfYusofBilbao2009, GoossensBelienSpieksma2012, LasekGagolewski2018, Csato2021b}. To that end, the draw procedure should be replicated and it needs to be determined which team advances to the next stage from a given match. We will also try to connect the results of the simulations to the analytical findings.

\subsection{Methodology} \label{Sec41}

If there are some draw constraints, it is a non-trivial problem to pair the competitors since a valid matching should be obtained. UEFA has followed its usual procedure \citep{BoczonWilson2022, Guyon2014c, Kiesl2013, KlossnerBecker2013} in the Europa League Round of 32 draw between the 2009/10 and the 2020/21 seasons.
In particular, the 16 unseeded teams (12 Europa League group runners-up and four lower-ranked third-placed teams from the Champions League) are drawn randomly from an urn. For each team, the set of possible opponents is established by a computer program in order to avoid any dead-end, a situation when the remaining teams cannot be paired. Then a club is drawn randomly from this set of possible opponents, and the pair of the two drawn teams is added to the matching.
The video of the 2020/21 UEFA Europa League Round of 32 draw is available at \url{https://www.uefa.com/uefaeuropaleague/draws/2021/2001241/}.

\begin{example} \label{Examp1}
Assume that there are four seeded teams $T1$--$T4$ and four unseeded teams $T5$--$T8$ still to be drawn under the following constraints:
\begin{itemize}
\item
Certain pairs of clubs have played in the same group, hence they cannot be drawn against each other ($T1$--$T5$, $T2$--$T6$, $T3$--$T7$, $T4$--$T8$);
\item
Some clubs are from the same national association: teams $\{ T2, \, T5 \}$ from country $A$ and teams $\{ T1, \, T4, \, T6 \}$ from country $B$.
\end{itemize}
First, $T5$ is drawn. It has two possible opponents, $T3$ and $T4$, because $T1$ is excluded by the group constraint and $T2$ is excluded by the association constraint. However, if $T5$ plays against $T3$, there remains no feasible assignment for $T6$ ($T2$ is prohibited by the group constraint, whereas $T1$ and $T4$ are excluded by the association constraint). Hence, $T5$ should be drawn against $T4$ since a draw condition is anticipated to apply.
Second, $T6$ is drawn and paired with $T3$.
Finally, there are two group winners and two group runners-up without any restriction, thus the opponent of $T7$ will be either $T1$ (implying a match between $T8$ and $T2$) or $T2$ (implying a match between $T8$ and $T1$).
\end{example}

According to Example~\ref{Examp1}, the draw in the Round of 32 is more complicated than it might seem at first glance. For instance, the conditions affecting the clubs still to be drawn should be taken into account, and the cardinality of the set of teams against which an unseeded team is allowed to play is not necessarily decreasing.

Our simulation is based on the UEFA mechanism above if there exists any draw restriction, that is, in the Round of 32 under Method 32, as well as in the Round of 32 and in the Round of 16 under Method 16. In all other rounds, a random matching is picked up due to the absence of draw constraints.

The UEFA emergency panel ruled on 17 July 2014 that Ukrainian and Russian clubs could not be drawn against each other due to the political unrest between the countries. This constraint is never taken into account in our study.

Regarding the likelihood of advancing, we use separate assumptions for the two issues addressed:
\begin{itemize}
\item
Same nation matchup: the probability of winning depends on the past performance of the club. Four cases are distinguished according to the rules of the Round of 32 draw (Europa League group winner, Europa League runner-up, seeded Champions League third-placed team, unseeded Champions League third-placed team).
\item
The winner comes from a particular country: the probability of winning depends on the national association of the club. Three cases are distinguished as suggested by Table~\ref{Table1} (English clubs, Spanish clubs, clubs from all other countries).
\end{itemize}

\begin{table}[t!]
  \centering
  \caption{Winning probabilities in the UEFA Europa League \\ knockout stage between the 2009/10 and 2020/21 seasons}
  \label{Table3}

\begin{threeparttable}
\addtocounter{table}{-1} 
  \begin{subtable}{\textwidth}
  \centering
  \caption{By type of the qualification}
  \label{Table3a}
  \rowcolors{1}{}{gray!20}
    \begin{tabularx}{\textwidth}{LLc} \toprule
    Team 1 & Team 2 & Winning probability of Team 1 \\ \bottomrule
    EL group winner & EL group runner-up & $85/139 \approx 0.612$ \\
    EL group winner & CL seeded & $22/43 \approx 0.512$ \\
    EL group winner & CL unseeded & $30/59 \approx 0.508$ \\ \hline
    EL group runner-up & CL seeded & $14/53 \approx 0.264$ \\
    EL group runner-up & CL unseeded & $5/11 \approx 0.455$ \\ \hline
    CL seeded & CL unseeded & $11/16 = 0.6875$ \\ \bottomrule
    \end{tabularx}
  \end{subtable}
\begin{tablenotes} \footnotesize
\item
\emph{Abbreviations}: EL group winner = UEFA Europa League group winner; \\
EL group runner-up = UEFA Europa League group runner-up; \\
CL seeded = higher-ranked third-placed team from the UEFA Champions League group stage; \\
CL unseeded = lower-ranked third-placed team from the UEFA Champions League group stage.
\end{tablenotes}
\end{threeparttable}

\vspace{0.5cm}  
  \begin{subtable}{\textwidth}
  \centering
  \caption{By national association}
  \label{Table3b}
  \rowcolors{1}{}{gray!20}
    \begin{tabularx}{0.8\textwidth}{LLc} \toprule
    Team 1 & Team 2 & Winning probability of Team 1 \\ \bottomrule
    English & Spanish & $6/16 \approx 0.375$ \\
    English & Other & $50/67 \approx 0.746$ \\
    Spanish & Other & $58/73 \approx 0.795$ \\ \toprule
    \end{tabularx}
  \end{subtable}
\end{table}

In both cases, the winning probability is determined by historical data from the 12 Europa League seasons organised between 2009/10 and 2020/21. They are presented in Table~\ref{Table3}. For example, Europa League group winners have played 139 matches against Europa League group runners-up, and the former team has won 85 times (Table~\ref{Table3a}). Analogously, there have been 67 clashes between clubs from England and all other nations except for Spain, among which 50 have been won by the English club (Table~\ref{Table3b}). Interestingly, the data align with the assumption of strong stochastic transitivity of the pairwise winning probability matrix, that is, if a team $x$ is stronger than another team $y$, then $x$ defeats any third team with a higher probability than $y$. This property is often used in the probabilistic analysis of knockout tournaments \citep{Arlegi2022, ArlegiDimitrov2020, Hwang1982, HorenRiezman1985, Schwenk2000}.
If the two clubs have the same type, the probability of winning is assumed to be 0.5.

\begin{sidewaystable}
  \centering
  \caption{National associations with more than one club in the UEFA Europa League knockout stage by season}
  \label{Table4}
\begin{threeparttable}
\rowcolors{1}{}{gray!20}
    \begin{tabularx}{1\textwidth}{l CCCC CCCC CCCC} \toprule
    & 2009/10 & 2010/11 & 2011/12 & 2012/13 & 2013/14 & 2014/15 & 2015/16 & 2016/17 & 2017/18 & 2018/19 & 2019/20 & 2020/21 \\ \bottomrule
    AUT   &       &       &       &       &       &       &       &       &       & $1|1|0|0$ & $1|0|1|0$ & $0|1|0|1$ \\
    BEL   & $1|1|0|1$ &       & $3|0|0|0$ &       &       & $1|0|0|1$ &       & $1|2|0|0$ &       & $1|0|0|1$ & $1|0|0|1$ & $0|1|1|0$ \\
    CZE   &       &       &       & $1|1|0|0$ & $0|1|0|1$ &       &       &       &       & $0|1|0|1$ &       &  \\
    ENG   & $0|2|0|1$ & $2|0|0|0$ & $0|1|2|0$ & $1|2|1|0$ & $1|1|0|0$ & $1|1|0|1$ & $2|0|1|0$ & $0|1|1|0$ &       & $2|0|0|0$ & $2|1|0|0$ & $3|0|1|0$ \\
    FRA   & $0|1|1|0$ & $1|1|0|0$ &       &       &       &       & $0|2|0|0$ & $1|0|1|0$ & $0|3|0|0$ &       &       &  \\
    GER   & $1|2|1|0$ & $2|0|0|0$ & $1|1|0|0$ & $1|3|0|0$ &       & $1|1|0|0$ & $1|2|1|0$ & $1|0|0|1$ & $0|0|1|1$ & $2|0|0|0$ & $0|2|0|1$ & $2|0|0|0$ \\
    GRE   &       & $0|2|0|0$ & $1|0|1|0$ &       &       &       &       & $0|2|0|0$ &       &       &       &  \\
    ITA   & $1|0|1|0$ &       & $0|2|0|0$ & $1|2|0|0$ & $1|1|1|1$ & $3|1|0|1$ & $2|1|0|0$ & $2|0|0|0$ & $3|0|0|1$ & $0|1|2|0$ & $0|1|1|0$ & $3|0|0|0$ \\
    NED   & $1|2|0|0$ & $1|0|2|0$ & $2|1|0|1$ &       & $1|0|0|1$ & $1|1|0|1$ &       & $1|1|0|0$ &       &       & $0|1|1|0$ & $1|0|1|0$ \\
    POL   &       &       & $0|2|0|0$ &       &       &       &       &       &       &       &       &  \\
    POR   & $2|0|0|0$ & $2|0|1|1$ & $1|1|0|1$ &       & $0|0|1|1$ &       & $1|1|1|0$ &       & $1|0|1|0$ & $0|1|1|0$ & $2|1|1|0$ & $0|2|0|0$ \\
    ROU   &       &       &       & $1|0|1|0$ &       &       &       &       &       &       &       &  \\
    RUS   &       & $2|0|1|1$ & $0|2|0|0$ & $1|1|0|1$ & $1|1|0|0$ & $1|0|1|0$ & $2|0|0|0$ & $1|1|0|1$ & $2|0|1|1$ & $1|1|0|0$ &       &  \\
    SCO   &       &       &       &       &       &       &       &       &       &       & $1|1|0|0$ &  \\
    SRB   &       &       &       &       &       &       &       &       & $0|2|0|0$ &       &       &  \\
    ESP   & $1|2|0|1$ & $1|1|0|0$ & $2|0|1|0$ & $0|2|0|0$ & $2|1|0|0$ & $0|2|1|0$ & $1|1|0|2$ & $0|3|0|0$ & $2|1|1|0$ & $3|0|1|0$ & $2|1|0|0$ & $1|2|0|0$ \\
    SUI   &       & $0|1|0|1$ &       &       &       &       &       &       &       &       &       &  \\
    TUR   & $2|0|0|0$ &       & $1|0|0|1$ &       &       & $1|1|0|0$ & $1|1|0|1$ & $2|0|1|0$ &       & $0|1|0|1$ &       &  \\
    UKR   &       & $1|1|0|0$ &       & $2|0|0|1$ & $0|3|1|0$ & $1|1|0|0$ &       &       &       & $1|0|0|1$ &       & $0|0|1|1$ \\ \toprule
    \end{tabularx}

\begin{tablenotes} \footnotesize
\item
\emph{Abbreviations:} AUT = Austria; BEL = Belgium; CZE = Czech Republic; ENG = England; FRA = France; GER = Germany; GRE = Greece; ITA = Italy; NED = Netherlands; POL = Poland; POR = Portugal; ROU = Romania, RUS = Russia; SCO = Scotland; SRB = Serbia; ESP = Spain; SUI = Switzerland; TUR = Turkey; UKR = Ukraine.
\item
\emph{Note:} In any tuple $i|j|k|\ell$, $i$ indicates the number of Europa League group winners, $j$ the number of Europa League runners-up, $k$ the number of seeded third-placed teams from the Champions League group stage, and $\ell$ the number of unseeded third-placed teams from the Champions League group stage. The total number of exclusions generated by an association in the Round of 32 draw equals $(i + k) \times (j + \ell)$.
\end{tablenotes}
\end{threeparttable}
\end{sidewaystable}

The effect of the association constraint depends on the identity of the participants.
Table~\ref{Table4} details the association constraints in each season. Note that at least two Spanish teams have qualified for the Round of 32 every year, while the association constraint for England and Germany have not influenced the tournament in one season only, respectively.
Therefore, all simulations are run separately for the 12 seasons with 10 million iterations.

\subsection{Assessing the implications of the association constraint} \label{Sec42}

In the following, the numerical results of the simulations will be presented. They always indicate \emph{relative} changes; for instance, if the probability of a particular event under Method $\emptyset$ is 40\%, which is changed by 5\% under Method 32, then the corresponding probability of this event under Method 32 is 42\%.

\begin{figure}[t!]
\centering

\begin{tikzpicture}
\begin{axis}[
name = axis1,
width = 0.46\textwidth, 
height = 0.5\textwidth,
title = {Method 32 compared to Method $\emptyset$},
title style = {align=center, font=\small},
xmajorgrids = true,
ymajorgrids = true,
xmin = 0,
xmax = 12.5,
scaled x ticks = false,
xlabel = {Relative increase in \%},
xlabel style = {align=center, font=\small},
xticklabel style = {/pgf/number format/fixed,/pgf/number format/precision=5},
ytick style = {draw = none},
symbolic y coords = {2009/10, 2010/11, 2011/12, 2012/13, 2013/14, 2014/15, 2015/16, 2016/17, 2017/18, 2018/19, 2019/20, 2020/21, \textbf{Average}},
ytick = data,
y dir = reverse,
legend style = {font=\small,at={(0.3,-0.2)},anchor=north west,legend columns=4},
legend entries = {Round of 16$\qquad$, Quarterfinals$\qquad$, Semifinals$\qquad$, Final}
]
\addplot [brown, only marks, mark = pentagon, very thick] coordinates{
(3.30987307212618,2009/10)
(2.57040661531005,2010/11)
(2.37340882465971,2011/12)
(4.17986347179615,2012/13)
(4.6468383693036,2013/14)
(4.19085952127145,2014/15)
(3.08969758261051,2015/16)
(2.4103261538029,2016/17)
(2.37642495852981,2017/18)
(1.98738253601789,2018/19)
(3.12389050654633,2019/20)
(0.97117693648725,2020/21)
(2.9358457123718,\textbf{Average})
};
\addplot [blue, only marks, mark = square, very thick] coordinates{
(3.23497466667448,2009/10)
(2.32829197898468,2010/11)
(2.2030581034379,2011/12)
(4.09186586770096,2012/13)
(4.34384886012111,2013/14)
(4.02588854760859,2014/15)
(2.6659055122217,2015/16)
(2.24461261402835,2016/17)
(2.25470526028961,2017/18)
(1.84764235297881,2018/19)
(2.68834996098539,2019/20)
(1.18922789442708,2020/21)
(2.75986430162154,\textbf{Average})
};
\addplot [red, only marks, mark = star, thick] coordinates{
(3.22785937303458,2009/10)
(2.08137183621036,2010/11)
(1.74815798758847,2011/12)
(3.53787356343034,2012/13)
(4.43389077530905,2013/14)
(3.73415249325657,2014/15)
(2.27971647696374,2015/16)
(2.09617241834517,2016/17)
(2.04939124336587,2017/18)
(1.6023437184008,2018/19)
(2.03609546321286,2019/20)
(0.966571480249279,2020/21)
(2.48279973578058,\textbf{Average})
};
\addplot [ForestGreen, only marks, mark = triangle, very thick] coordinates{
(2.92957392302011,2009/10)
(1.87396005339711,2010/11)
(1.41402786373799,2011/12)
(3.45867296843874,2012/13)
(4.24722137951956,2013/14)
(3.89738768904244,2014/15)
(1.71654340094347,2015/16)
(1.49547572302657,2016/17)
(1.88552034973637,2017/18)
(0.979469019302814,2018/19)
(2.06140051432602,2019/20)
(0.672300996201725,2020/21)
(2.21929615672443,\textbf{Average})
};
\end{axis}

\begin{axis}[
at = {(axis1.south east)},
xshift = 0.15\textwidth,
width = 0.46\textwidth, 
height = 0.5\textwidth,
title = {Method 16 compared to Method $\emptyset$},
title style = {align=center, font=\small},
xmajorgrids = true,
ymajorgrids = true,
xmin = 0,
xmax = 12.5,
scaled x ticks = false,
xlabel = {Relative increase in \%},
xlabel style = {align=center, font=\small},
xticklabel style = {/pgf/number format/fixed,/pgf/number format/precision=5},
ytick style = {draw = none},
symbolic y coords = {2009/10, 2010/11, 2011/12, 2012/13, 2013/14, 2014/15, 2015/16, 2016/17, 2017/18, 2018/19, 2019/20, 2020/21, \textbf{Average}},
ytick = data,
y dir = reverse,
]
\addplot [blue, only marks, mark = square, very thick] coordinates{
(9.71097112059602,2009/10)
(9.03429021680095,2010/11)
(8.87301825201379,2011/12)
(10.6818724868012,2012/13)
(10.953091463159,2013/14)
(10.3520948847847,2014/15)
(9.30226824340694,2015/16)
(9.33076939806217,2016/17)
(8.54346972803239,2017/18)
(8.58866054301202,2018/19)
(9.46347954520121,2019/20)
(7.80971153610561,2020/21)
(9.38697478483133,\textbf{Average})
};
\addplot [red, only marks, mark = star, thick] coordinates{
(9.53549860700644,2009/10)
(8.28162821152278,2010/11)
(8.16637263651647,2011/12)
(10.299907926779,2012/13)
(10.6962914972263,2013/14)
(9.50038081257685,2014/15)
(8.3493033808997,2015/16)
(9.23080129786213,2016/17)
(7.88878468582777,2017/18)
(8.07698530303977,2018/19)
(8.72435221145269,2019/20)
(7.38325047912933,2020/21)
(8.84446308748661,\textbf{Average})
};
\addplot [ForestGreen, only marks, mark = triangle, very thick] coordinates{
(8.62500720344088,2009/10)
(7.28807232650686,2010/11)
(7.46390723876167,2011/12)
(9.23979290741395,2012/13)
(10.264118333839,2013/14)
(8.84217106251424,2014/15)
(7.59609517998781,2015/16)
(8.22620567964907,2016/17)
(6.97462928885322,2017/18)
(7.19744266814049,2018/19)
(8.19371429871874,2019/20)
(6.64528730349685,2020/21)
(8.04637029094355,\textbf{Average})
};
\end{axis}
\end{tikzpicture}

\caption{The effect of draw restrictions on the probability of a match between \\ two teams from the same national association, UEFA Europa League \\ by season, winning probabilities derived from empirical data}
\label{Fig3}

\end{figure}


Figure~\ref{Fig3} shows how the association constraint influences the frequency of a match played by clubs from the same country. The UEFA policy has increased the probability of such a clash in each round of the Europa League other than the Round of 32 by about 2-3\% and at most 5\% in the period considered. Remarkably, the changes are mostly driven by the distribution of the teams between the national associations in the given season, that is, they almost coincide for the Round of 16, quarterfinals, semifinals, and the final, which reinforces the finding from the basic mathematical model. This observation remains valid if the association constraint is introduced in the Round of 16, too, when the probability of same nation matchups grows by at least 6\% but not more than 12\% in every subsequent round.

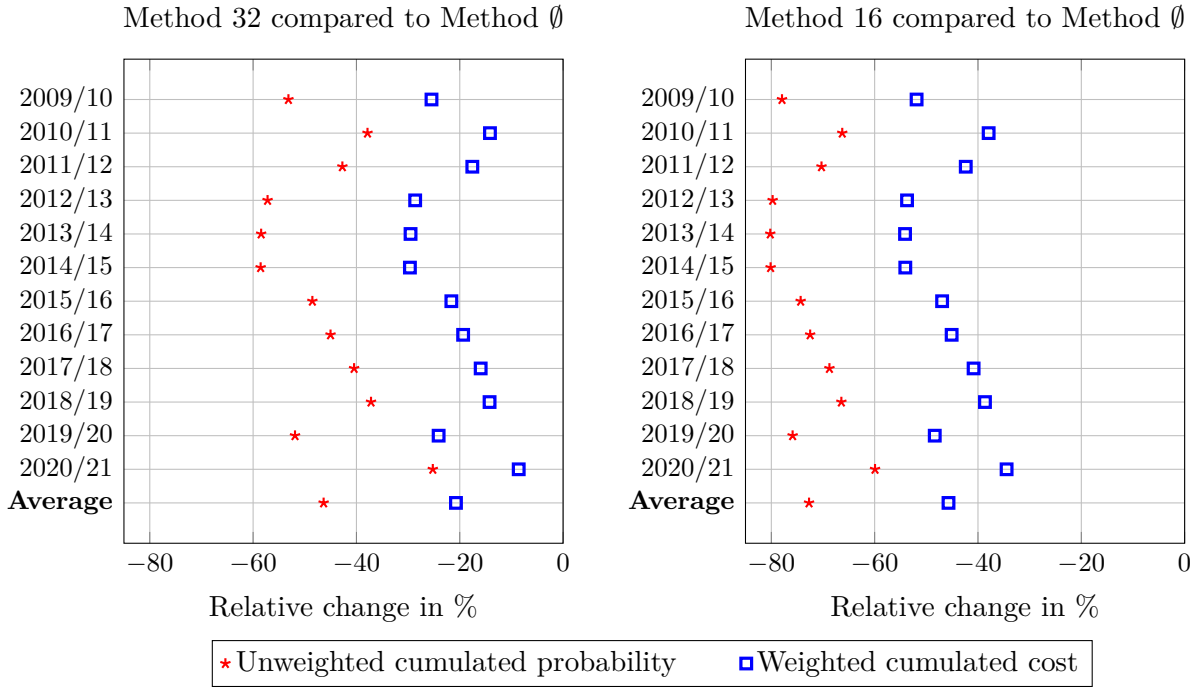
\begin{figure}[t!]
\centering

\begin{tikzpicture}
\begin{axis}[
name = axis3,
width = 0.46\textwidth, 
height = 0.5\textwidth,
title = {Method 32 compared to Method $\emptyset$},
title style = {align=center, font=\small},
xmajorgrids = true,
ymajorgrids = true,
xmin = -85,
xmax = 0,
scaled x ticks = false,
xlabel = {Relative change in \%},
xlabel style = {align=center, font=\small},
xticklabel style = {/pgf/number format/fixed},
ytick style = {draw = none},
symbolic y coords = {2009/10, 2010/11, 2011/12, 2012/13, 2013/14, 2014/15, 2015/16, 2016/17, 2017/18, 2018/19, 2019/20, 2020/21, \textbf{Average}},
ytick = data,
y dir = reverse,
legend style = {font=\small,at={(0.2,-0.2)},anchor=north west,legend columns=2},
legend entries = {Unweighted cumulated probability$\qquad$, Weighted cumulated cost}
]
\addplot [red, only marks, mark = star, thick] coordinates{
(-53.1639719350959,2009/10)
(-37.8582139656493,2010/11)
(-42.7213467321635,2011/12)
(-57.201197078578,2012/13)
(-58.4445791893669,2013/14)
(-58.5425095144154,2014/15)
(-48.5305479699887,2015/16)
(-45.0142501035427,2016/17)
(-40.4532770947181,2017/18)
(-37.1780329624039,2018/19)
(-51.8898903966319,2019/20)
(-25.2194943925949,2020/21)
(-46.3514426112624,\textbf{Average})
};
\addplot [blue, only marks, mark = square, very thick] coordinates{
(-25.4721575496414,2009/10)
(-14.1551791370807,2010/11)
(-17.5702312880404,2011/12)
(-28.6327894347644,2012/13)
(-29.5258965675039,2013/14)
(-29.6489238418811,2014/15)
(-21.6247793528488,2015/16)
(-19.3487760038739,2016/17)
(-15.9529174061667,2017/18)
(-14.2315556358594,2018/19)
(-24.0900523784225,2019/20)
(-8.59073828518992,2020/21)
(-20.7369997401061,\textbf{Average})
};
\end{axis}

\begin{axis}[
at = {(axis3.south east)},
xshift = 0.15\textwidth,
width = 0.46\textwidth, 
height = 0.5\textwidth,
title = {Method 16 compared to Method $\emptyset$},
title style = {align=center, font=\small},
xmajorgrids = true,
ymajorgrids = true,
xmin = -85,
xmax = 0,
scaled x ticks = false,
xlabel = {Relative change in \%},
xlabel style = {align=center, font=\small},
xticklabel style = {/pgf/number format/fixed},
ytick style = {draw = none},
symbolic y coords = {2009/10, 2010/11, 2011/12, 2012/13, 2013/14, 2014/15, 2015/16, 2016/17, 2017/18, 2018/19, 2019/20, 2020/21, \textbf{Average}},
ytick = data,
y dir = reverse,
]
\addplot [red, only marks, mark = star, thick] coordinates{
(-77.9317612207183,2009/10)
(-66.2708840669298,2010/11)
(-70.2884330248107,2011/12)
(-79.7456037960648,2012/13)
(-80.2025047091472,2013/14)
(-80.1366275797335,2014/15)
(-74.3223241292392,2015/16)
(-72.4828839692633,2016/17)
(-68.7530844524033,2017/18)
(-66.4426553542642,2018/19)
(-75.8774508055901,2019/20)
(-59.9276143528294,2020/21)
(-72.6984856217495,\textbf{Average})
};
\addplot [blue, only marks, mark = square, very thick] coordinates{
(-51.880486755066,2009/10)
(-37.9348330573902,2010/11)
(-42.3637216258231,2011/12)
(-53.73082099331,2012/13)
(-54.1183735477582,2013/14)
(-54.0671327847386,2014/15)
(-46.9419452561796,2015/16)
(-45.0957507240464,2016/17)
(-40.8466032994541,2017/18)
(-38.6481161232352,2018/19)
(-48.3668272367797,2019/20)
(-34.4592011050575,2020/21)
(-45.7044843757366,\textbf{Average})
};
\end{axis}
\end{tikzpicture}

\caption{The effect of draw restrictions on the cumulated probability of a match \\ between two teams from the same national association, UEFA Europa League \\ by season, winning probabilities derived from empirical data}
\label{Fig4}

\end{figure}


Imposing the association constraint raises the likelihood of a game played by two teams from the same country in all subsequent rounds.
Thus, Figure~\ref{Fig4} plots the cumulated impact both in the unweighted and weighted settings; in the latter case, a match at the end of the tournament is punished more strongly, similar to Section~\ref{Sec3}. Although UEFA has reduced the probability of such an unwanted matchup by at most 60\%, the average gain is only 20\% and the decrease always remains below 30\% if the rounds of these clashes are taken into account. On the other hand, extending the association constraint to the Round of 16 would have cut the chance of a same nation matchup by at least 60\% in the unweighted, and by more than 30\% even in the weighted scenario such that the expected reduction is about 45\% even in the latter case. Shortly, the effectiveness of Method 16 in the worst case almost coincides with the effectiveness of Method 32 in the best case.

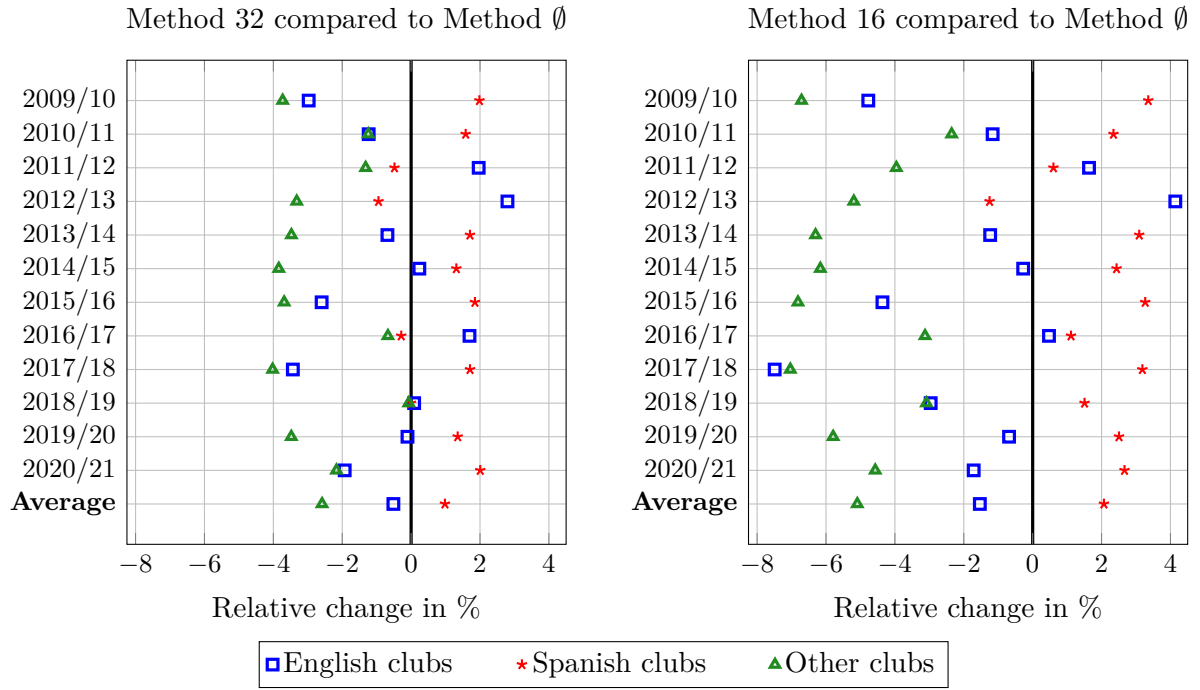
\begin{figure}[t!]
\centering

\begin{tikzpicture}
\begin{axis}[
name = axis1,
width = 0.46\textwidth, 
height = 0.5\textwidth,
title = {Method 32 compared to Method $\emptyset$},
title style = {font = \small},
xmajorgrids,
ymajorgrids,
xmin = -8.25,
xmax = 4.5,
scaled x ticks = false,
xlabel = {Relative change in \%},
xlabel style = {align=center, font=\small},
xticklabel style = {/pgf/number format/fixed,/pgf/number format/precision=5},
ytick style = {draw = none},
symbolic y coords = {2009/10, 2010/11, 2011/12, 2012/13, 2013/14, 2014/15, 2015/16, 2016/17, 2017/18, 2018/19, 2019/20, 2020/21, \textbf{Average}},
ytick = data,
y dir = reverse,
legend style = {font=\small,at={(0.3,-0.2)},anchor=north west,legend columns=4},
legend entries = {English clubs$\qquad$, Spanish clubs$\qquad$, Other clubs},
extra x ticks = 0,
extra x tick labels = ,
extra x tick style = {grid = major, major grid style = {black,very thick}},
]
\addplot [blue, only marks, mark = square, very thick] coordinates{
(-2.97372078915096,2009/10)
(-1.2316830791463,2010/11)
(1.96138086317623,2011/12)
(2.7968420045313,2012/13)
(-0.687808870989137,2013/14)
(0.238211014255496,2014/15)
(-2.59778871978111,2015/16)
(1.6946533835523,2016/17)
(-3.43337345598131,2017/18)
(0.0875162002278218,2018/19)
(-0.105629659939188,2019/20)
(-1.9290129255138,2020/21)
(-0.51503450289655,\textbf{Average})
};
\addplot [red, only marks, mark = star, thick] coordinates{
(1.9831994930865,2009/10)
(1.5835375722965,2010/11)
(-0.482828581631634,2011/12)
(-0.947261968233104,2012/13)
(1.70808494839947,2013/14)
(1.31338618860177,2014/15)
(1.85257077242984,2015/16)
(-0.286398154214951,2016/17)
(1.70969141415249,2017/18)
(-0.00410468935291286,2018/19)
(1.35206378575612,2019/20)
(2.00678677628128,2020/21)
(0.982393963130934,\textbf{Average})
};
\addplot [ForestGreen, only marks, mark = triangle, very thick] coordinates{
(-3.72911932721705,2009/10)
(-1.24300698142337,2010/11)
(-1.32063449688351,2011/12)
(-3.32057336441451,2012/13)
(-3.47554675013403,2013/14)
(-3.84206666442454,2014/15)
(-3.68543936742947,2015/16)
(-0.673660247759877,2016/17)
(-4.02092501089557,2017/18)
(-0.0693270397473023,2018/19)
(-3.47853958419441,2019/20)
(-2.1673527676063,2020/21)
(-2.58551596684415,\textbf{Average})
};
\end{axis}

\begin{axis}[
at = {(axis1.south east)},
xshift = 0.15\textwidth,
width = 0.46\textwidth, 
height = 0.5\textwidth,
title = {Method 16 compared to Method $\emptyset$},
title style = {font = \small},
xmajorgrids,
ymajorgrids,
xmin = -8.25,
xmax = 4.5,
scaled x ticks = false,
xlabel = {Relative change in \%},
xlabel style = {align=center, font=\small},
xticklabel style = {/pgf/number format/fixed,/pgf/number format/precision=5},
ytick style = {draw = none},
symbolic y coords = {2009/10, 2010/11, 2011/12, 2012/13, 2013/14, 2014/15, 2015/16, 2016/17, 2017/18, 2018/19, 2019/20, 2020/21, \textbf{Average}},
ytick = data,
y dir = reverse,
extra x ticks = 0,
extra x tick labels = ,
extra x tick style = {grid = major, major grid style = {black,very thick}},
]

\addplot [blue, only marks, mark = square, very thick] coordinates{
(-4.77256684692118,2009/10)
(-1.16245796242358,2010/11)
(1.63583243314691,2011/12)
(4.14091168095554,2012/13)
(-1.23431022129503,2013/14)
(-0.275647392126344,2014/15)
(-4.36156828984045,2015/16)
(0.473653702544641,2016/17)
(-7.49035057718872,2017/18)
(-2.96083300339909,2018/19)
(-0.68533752404677,2019/20)
(-1.71277110258983,2020/21)
(-1.53378709193199,\textbf{Average})
};
\addplot [red, only marks, mark = star, thick] coordinates{
(3.35520126162856,2009/10)
(2.34575875021708,2010/11)
(0.602457616014829,2011/12)
(-1.2512863104623,2012/13)
(3.09357252610021,2013/14)
(2.43506693973683,2014/15)
(3.26731176535862,2015/16)
(1.11264953545394,2016/17)
(3.18449757316046,2017/18)
(1.50591165150529,2018/19)
(2.50623030722694,2019/20)
(2.66579374796971,2020/21)
(2.06859711365919,\textbf{Average})
};
\addplot [ForestGreen, only marks, mark = triangle, very thick] coordinates{
(-6.71077722184704,2009/10)
(-2.35160413086217,2010/11)
(-3.95741934003732,2011/12)
(-5.19272032959649,2012/13)
(-6.30403731088118,2013/14)
(-6.16442378788419,2014/15)
(-6.81377370487356,2015/16)
(-3.12496258630804,2016/17)
(-7.03596365568896,2017/18)
(-3.08490449828349,2018/19)
(-5.79261686255783,2019/20)
(-4.57100037467216,2020/21)
(-5.09201698362437,\textbf{Average})
};
\end{axis}
\end{tikzpicture}

\caption{The effect of draw restrictions on the probability that clubs from \\ some national associations win the tournament, UEFA Europa League \\ by season, winning probabilities derived from empirical data}
\label{Fig5}

\end{figure}


The introduction of the association constraint may contribute to the dominance of countries with outstanding teams. According to Table~\ref{Table3}, Spanish clubs progress against any non-English team with a probability of 80\%, which explains their excellent performance in the Europa League (see Table~\ref{Table1}). As Figure~\ref{Fig5} presents, the UEFA rule has only marginally increased the already high winning probability of Spanish teams, which has been above 40\% in each season. Nonetheless, other teams have lost each year because of prohibited clashes in the Round of 32. Imposing the association constraint in the Round of 16 (Method 16), too, would have decreased further the chances of teams outside England and Spain, but the effects seem to remain at a tolerable level. The simulation reinforces our result obtained from the mathematical model, namely, the association constraint favours the teams of relatively strong countries with respect to their winning probabilities. However, the increase is less than linear if the restriction is imposed in more rounds.

Consequently, the rules of the draw in the UEFA Europa League knockout stage can hardly be justified: requiring the association constraint only in the Round of 32 is a strange compromise between avoiding same nation matchups and the dominance of English and Spanish clubs. Banning such clashes in the Round of 16 would have significantly reduced the likelihood of a match played by clubs from the same country at a moderate price (although this is mostly paid by the clubs outside England and Spain). UEFA is encouraged to investigate the issue more deeply and consider applying the association constraint in later rounds.

\section{Conclusions} \label{Sec5}

Inspired by the recent seasons of the UEFA Europa League, the present paper has attempted to uncover some consequences of applying draw constraints in a knockout tournament. First, we have formulated a simple mathematical model to understand how such a restriction can affect the probability of a match played by two teams from the same country, as well as the likelihood that the winner comes from a particular country. After that, the role of the association constraint has been analysed via Monte Carlo simulations based on historical results of the Europa League.

The main findings of the study can be summarised in the following way:
\begin{itemize}
\item
Imposing the association constraint in the Round of 32 draw has increased the chance of a same nation matchup by about 2-3\% and at most by 5\% in any subsequent round. Extending the restriction to the Round of 16 draw would have led to a rise of 6--12\%.
\item
Imposing the association constraint in the Round of 32 draw has reduced the probability of such an unwanted matchup by at most 60\%, but the average gain is only about 20\% and never more than 30\% if the higher cost of a match played in a later round is taken into account. Extending the restriction to the Round of 16 draw would have led to a decrease of at least 60\% with an average decline of 45\% in the weighted setting.
The effectiveness of Method 16 in the worst case almost coincides with the effectiveness of Method 32 in the best case.
\item
The association constraint used in the Round of 32 draw has increased the likelihood that a Spanish club wins the Europa League in most seasons, with a mean of 1\% in relative terms. The winning probability of teams outside England and Spain has declined by at most 4\%. Extending the restriction to the Round of 16 draw would have doubled these relative changes. 
\end{itemize}
The results are in line with the implications of the mathematical model, that is, a draw constraint applied in a given round of a knockout tournament increases the probability of same nation matchups to approximately the same degree in each subsequent round, and it supports the teams of the strongest associations.

Consequently, the international character of the Europa League could have been improved further by prohibiting matches between teams from the same country even in the Round of 16 draw.

With the start of a new competition called UEFA Europa Conference League---the third tier of European club football---from the 2021/22 season, the Europa League has also been reformed to contain knockout round play-offs with 16 teams (contested by the eight group runners-up and the eight third-placed teams from the Champions League group stage) and a knockout stage starting with the Round of 16 (contested by the eight group winners and the eight winners of the play-offs). Analogously, the Europa Conference League group stage is followed by knockout round play-offs with 16 teams (contested by the eight group runners-up and the eight third-placed teams from the Europa League group stage) and a knockout stage starting with the Round of 16 (contested by the eight group winners and the eight winners of the play-offs).
Since there exists an association constraint in both the Europa League \citep[Articles~17 and 18]{UEFA2021d} and the Europe Conference League \citep[Articles~17 and 18]{UEFA2021e} draws of the knockout round play-offs and the Round of 16, UEFA has essentially implemented our recommendation above.

There are several ways to continue our research. Even though the complex interactions between competitors from many countries are difficult to handle with analytical tools, the mathematical results might be developed. The simulation model can be refined with respect to the winning probabilities. Finally, this study has ignored the problem that the restrictions make the draw unevenly distributed, which is unfair \citep{KlossnerBecker2013}. However, these distortions can be mitigated only by slacking the constraints \citep{BoczonWilson2022}.

Last but not least, it remains to be seen how the conclusions of our work can be exported to other tournaments. UEFA uses the same association constraint in the Champions League Round of 16 draw. However, this restriction is more effective in the Champions League compared to the Europa League as clubs from the top European leagues dominate the knockout stage of the former tournament to a higher degree and these national associations have usually more teams playing in the Champions League. As an illustration, consider the 2022/23 Champions League Round of 16 draw, where Bayern Munich was the only German group winner and Liverpool was the only English runner-up. Due to the presence of three English group winners and three German runners-up, the association constraint more than doubled the probability of a match between Bayern Munich and Liverpool from $1/7 \approx 14.29$\% to 37.12\% \citep{Guyon2022c}. Therefore, the effects of the association constraint are probably substantially higher in the Champions League. Hence, as we have argued at the end of Section~\ref{Sec2}, imposing the association constraint in the quarterfinals of the Champions League cannot be recommended.

Regarding different draw restrictions, we do not know that constraints other than prohibited clashes are used in knockout contests. Nonetheless, this possibility cannot be excluded since the draws of tournaments with round-robin groups sometimes contain further restrictions: at most two European nations can play in a FIFA World Cup group \citep{Csato2022e, FIFA2022a, Guyon2015a}, and the criteria used in the draw of the European Qualifiers for the 2022 FIFA World Cup \citep{Csato2022f, Csato2022a, UEFA2020c} are even more complex. In our opinion, it would be hazardous to estimate the effects of these constraints on the basis of the computations above; but the presented results can persuade the decision-makers to allocate resources to build a customised simulation model for this purpose.

\section*{Acknowledgements}
\addcontentsline{toc}{section}{Acknowledgements}
\noindent
This paper could not have been written without \emph{my father} (also called \emph{L\'aszl\'o Csat\'o}), who has primarily coded the simulations in Python. \\
\emph{D\'ora Gr\'eta Petr\'oczy} and six anonymous reviewers provided valuable comments and suggestions on earlier drafts. \\
We are indebted to the \href{https://en.wikipedia.org/wiki/Wikipedia_community}{Wikipedia community} for summarising important details of the sports competitions discussed in the paper.

\bibliographystyle{apalike}
\bibliography{All_references}

\end{document}